\newtheorem{dfn}{Definition}
\newtheorem{lem}{Lemma}
\newtheorem{thm}{Theorem}
\newtheorem{cor}{Corollary}
\newcommand{\bP}{\mathbb P}
\newcommand{\PP}{\mathbb{P}}
\title{On Multiquantum Bits, Segre Embeddings and Coxeter Chambers}
\author[1]{Noémie C. Combe}
\affil[1]{Department of Mathematics, University of Warsaw, Ulica Banacha 2, 02-097 Warszawa, n.combe@uw.edu.pl}
\begin{document}

\maketitle

\begin{abstract}

This work explores the interplay between quantum information theory, algebraic geometry, and number theory, with a particular focus on multiqubit systems, their entanglement structure, and their classification via geometric embeddings. The Segre embedding, a fundamental construction in algebraic geometry, provides an algebraic framework to distinguish separable and entangled states, encoding quantum correlations in projective geometry. We develop a systematic study of qubit moduli spaces, illustrating the geometric structure of entanglement through hypercube constructions and Coxeter chamber decompositions.

We establish a bijection between the Segre embeddings of tensor products of projective spaces and binary words of length 
$n-1$, structured as an $(n-1)$-dimensional hypercube, where adjacency corresponds to a single Segre operation. This reveals a combinatorial structure underlying the hierarchy of embeddings, with direct implications for quantum error correction schemes. The symmetry of the Segre variety under the Coxeter group of type 
$A$ allows us to analyze quantum states and errors through the lens of reflection groups, viewing separable states as lying in distinct Coxeter chambers on a Segre variety. The transitive action of the permutation group on these chambers provides a natural method for tracking errors in quantum states and potentially reversing them. Beyond foundational aspects, we highlight relations between Segre varieties and Dixon elliptic curves, drawing connections between entanglement and number theory. 

\textbf{Mathematics Subject Classification (MSC 2020):}
14M99 (None of the above, but in Algebraic Geometry),
20F55 (Reflection and Coxeter groups),
81P40 (Quantum coherence, entanglement, quantum correlations),
11G05 (Elliptic curves over global fields),
05C12 (Hypercubes in graph theory).
\end{abstract}
\thanks{
{\bf Acknowledgements} This research is part of the project No. 2022/47/P/ST1/01177 co-founded by the National Science Centre and the European Union's Horizon 2020 research and innovation program, under the Marie Sklodowska Curie grant agreement No. 945339 \includegraphics[width=1cm, height=0.5cm]{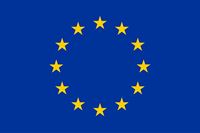}. I warmly thank Frederic Barbaresco for pointing out this passionating topic of qubits to me; 
Bartosz Naskrecki for stimulating and enthusiastic discussions about quantum computers; Bernd Sturmfels for drawing my attention to algebraic statistics; Piotr Sulkowski for organising the string theory club, where I learned many beautiful subjects;   Benjamin Collas for giving me the possibility of being part of the Arithmetic and Homotopic Galois Theory group RIMS-ENS; Jaroslaw Wisniewski for stimulating discussions on Coxeter groups. Finally, I thank Don Zagier for his great enthusiasm and support. 

\section{Introduction}
Mathematical databases form a growing area of research at the intersection of mathematics and computer science. These databases organize, store, and provide access to mathematical knowledge in a structured and searchable manner. 

\,

This paper seeks to reflect on the conceptual framework underlying these systems. We choose to consider quantum bits \cite{G}, which can encode an enormous amount of information, and investigate underlying structures. 

\,

An essential challenge, related to quantum bits emerges from the theory of entanglement. The problem lies in establishing robust criteria for separability. Although the notion of separability is precisely defined, the practical determination of whether a given quantum state is separable or entangled often eludes a straightforward resolution. \cite{BZ}

\, 

This difficulty naturally refines into a more nuanced problem: the \emph{ quantification and degree of entanglement} for a specific entangled state. Here, we encounter a vast and largely uncharted landscape, as there exists no canonical measure of entanglement. 

\, 

The choice of a metric often reflects the context of the problem, shaped by the underlying physical setup or the intended application. Thus, the study of entanglement, in its very essence, eludes a single, definitive mathematical formulation. Rather than being a fully developed mathematical theory, it remains an intricate and evolving question, standing at the crossroads of algebra, geometry, and computation. Its very nature suggests that multiple formalisms may coexist, each capturing different aspects of this deep and enigmatic structure.

\,

Another inherent difficulty with qubits lies in their susceptibility to errors. The subtle interplay between coherence and decoherence, entanglement and noise, renders precise control over these errors a tricky goal. It is precisely this challenge—the struggle to correct quantum errors—that serves as the driving force behind much of the ongoing research in this domain. We propose an attempt to simplify the error correction, based on \cite{M} and \cite{C18,C19}.

\,

In the elegant framework of complex algebraic geometry, the notion of separability for quantum states acquires a particularly natural and concise formulation. Pure multiparticle states correspond to points in the complex projective space \( \mathbb{P}^N \), which parametrizes lines through the origin in the complex vector space \( \mathbb{C}^{N+1} \). Entanglement, in this geometric setting, is illuminated through the Segre embedding, a fundamental construction arising from the categorical product of projective spaces. Explicitly, this is a morphism of complex algebraic varieties:  
\[
\mathbb{P}^{n_1} \times \cdots \times \mathbb{P}^{n_k} \longrightarrow \mathbb{P}^N,
\]  

where \( N = (n_1+1)(n_2+1)\cdots(n_k+1) - 1 \), that maps the Cartesian product of projective spaces into a single projective space.  

\,

The image of this embedding, known as the \emph{Segre variety}, serves as a geometric locus that characterizes separable states. Its structure is governed by a family of homogeneous quadratic polynomial equations in \( 2^n \) variables, where \( n \) denotes the number of particles. In this language, entanglement emerges as the deviation of a state from this algebraic variety---a property encoded in the intricate geometry of the embedding and the defining equations of the Segre variety.

\section{Quantum Bits and entanglements}

\subsubsection{Classical bits}
In the classical theory of computation and digital communication, the fundamental unit of information is the binary digit, or bit. 
Digital information is typically represented in two states: $0$ and $1$. Mathematically, a bit is an element of the set $\{0,1\}$. 

\, 
\subsubsection{Quantum bits}
 Their quantum analogs, the quantum bits (qubits), offer a powerful tool to explore: even a small number of qubits can encode an enormous amount of information because particles can be in superpositions of states. However, quantum bits are more susceptible to errors than their classical analogs.

\,

The fundamental unit of quantum information, the quantum bit or qubit, is represented by a complex vector:

\begin{equation}
\begin{bmatrix} \alpha \\ \beta \end{bmatrix}\in \mathbb{C}^2,
\end{equation}

where $\alpha$ and $\beta$ are complex numbers representing the weight of the $0$ and $1$ states of the superposition. Usually, we denote by $\mathcal{H}_2$ the corresponding 2-dimensional Hilbert space and when it is clear from the context we simply write $\mathcal{H}$. These coefficients are complex probability amplitudes and satisfy the normalization condition:

\begin{equation}
|\alpha|^2 + |\beta|^2 = 1.
\end{equation}

Notice that this is precisely the equation of a 3-sphere in $\mathbb{R}^4$.

\subsubsection{Dirac notation}

The state corresponding to the 0 (resp. 1) of the classical bit are written in Dirac's bra-ket notation as follows:

\begin{equation}
|0\rangle = \begin{bmatrix} 1 \\ 0 \end{bmatrix}, \quad |1\rangle = \begin{bmatrix} 0 \\ 1 \end{bmatrix}.
\end{equation}

Using Dirac’s bra-ket notation, a quantum bit is therefore expressed as:

\begin{equation}
|\psi\rangle =\alpha |0\rangle + \beta |1\rangle .
\end{equation}
\subsubsection{Qubit moduli space}
Following the previous subsection, one quibit is parametrized by a 3-sphere.  In the context of qubits, this sphere is often referred as the Bloch sphere. This is the moduli space parametrizing the space of states of the qubit and given by the complex equation $|\alpha|^2 + |\beta|^2 = 1$, where $\alpha, \beta\in \mathbb{C}$. Note that this sphere can be also interpreted as quaternion vectors of unit norm (quaterion versors) i.e. those elements  $q\in \mathbb{H}$ such that $|q|^2=1,$ where $|q|^2=qq^*$, where $q^*$ is the (quaternionic) conjugate.

\subsubsection{Operations on the the moduli space}
We can describe our state via a two-component complex vector. Any possible modification of that vector can be represented by a Hermitian $2\times 2$ matrix. 

Operations on the moduli space of a qubit are represented by Pauli operators. 

The Pauli matrices form a complete basis for any Hermitian $2 \times 2$ matrix:

\[R= \begin{pmatrix}
1&0\\
0&-1
\end{pmatrix} \quad I= \begin{pmatrix}
1&0\\
0&1
\end{pmatrix}\quad 
S=\begin{pmatrix}
0&1\\
1&0
\end{pmatrix} \quad 
V=\begin{pmatrix}
0&-\imath\\
\imath&0
\end{pmatrix} \]

\,

\subsubsection{Probability Simplex}

Regarding this from the probabilistic perspective, we have the probabilities $p_1=|\alpha|^2$ and $p_2=|\beta|^2$ where naturally $p_i\geq 0$ for $i\in\{1,2\}$  and $p_1+p_2=1$. 
We have done a change of variables mapping $$\begin{aligned}\alpha\mapsto | \alpha|^2= \alpha \alpha^*=p_1,\\ \beta\mapsto | \beta|^2= \beta \beta^*=p_2.\end{aligned}$$
The image of this map generates a probabilistic simplex of dimension 1, that we denote  $\Delta_1$.
In a more general framework, the standard simplex $\Delta_m$ in $\mathbb{R}^{m+1}$ is given by $$\Delta_m={\rm conv}\{e_0,e_1,\cdots,e_m\}\subset \mathbb{R}^{m+1}.$$

\subsection{Multiquantum bits and entanglement problem}
Whenever more than one qubit is involved in our system, the number of computational basis states increases: the number of states grows exponentially. Given $N$ qubits, there are $2^N$ basis states.
\subsubsection{Example}
\begin{itemize}
    \item For $N=2$ qubits, this generates four basis states:  
    \[
    |00\rangle, \quad |10\rangle, \quad |01\rangle, \quad |11\rangle.
    \]
    
    \item For $N=3$ qubits, we have eight basis states:  
    \[
    \begin{aligned}
    &|000\rangle, \quad |100\rangle, \quad |010\rangle, \quad |001\rangle, \\
    &|110\rangle, \quad |101\rangle, \quad |011\rangle, \quad |111\rangle.
    \end{aligned}
    \]
    
    \item For $N=4$ qubits, we have sixteen basis states:  
    \[
    \begin{aligned}
    &|0000\rangle, \quad |0100\rangle, \quad |0010\rangle, \quad |0001\rangle, \\
    &|0110\rangle, \quad |0101\rangle, \quad |0011\rangle, \quad |0111\rangle, \\
    &|1000\rangle, \quad |1100\rangle, \quad |1010\rangle, \quad |1001\rangle, \\
    &|1110\rangle, \quad |1101\rangle, \quad |1011\rangle, \quad |1111\rangle.
    \end{aligned}
    \]
\end{itemize}

A quantum state must specify the complex coefficients of all of these basis vectors; this information can no longer be represented as a simple geometrical picture like the Bloch sphere for more than 1 qubit.

\subsubsection{Two qubit pure states: entangled or separable states?}
 Given two qubits, their state lives in the tensor product space: $\mathcal{H}_A \otimes \mathcal{H}_B\cong \mathbb{C}^4$, which is a Hilbert space of four complex dimensions. 

\,

A general two-qubit state is given by:

\begin{equation}
| \psi \rangle = \alpha |00\rangle + \beta |01\rangle + \gamma |10\rangle + \delta |11\rangle,
\end{equation}
where the coefficients $\alpha, \beta,\gamma , \delta\in \mathbb{C}$ are complex numbers which satisfy
 \[|\alpha|^2+|\beta|^2+|\gamma|^2+|\delta|^2=1.\]

\begin{dfn}~
\begin{itemize}
\item A state is entangled if it cannot be factorized as a simple tensor product.
\item A separable (i.e. non-entangled) state takes the form:
\end{itemize}
\begin{equation}
| \psi \rangle = (c_{00} |0\rangle + c_{01} |1\rangle) \otimes (c_{10} |0\rangle + c_{11} |1\rangle).
\end{equation}
\end{dfn}
We will discuss a criterion allowing to distinguish the case of separable states from the entangled states. 
\section{Quantum Entanglement and the Segre Embedding} 
\subsubsection{Quantum Entanglement Problem}
One of the fundamental features of quantum mechanics is entanglement. An essential challenge in the theory of entanglement lies in establishing a good criterion for separability and obtaining information about the entanglement. Although the notion of separability is precisely defined, the practical determination of whether a given quantum state is separable or entangled often eludes a straightforward resolution. Here, we encounter a vast and largely uncharted landscape, as there exists \emph{no canonical measure} of entanglement.

\subsubsection{Segre embedding}

 The \emph{Segre embedding} provides a natural algebraic construction that relates the tensor product of vector spaces (for instance Hilbert spaces) with the product of their corresponding projective spaces. Explicitly, consider two finite-dimensional complex vector spaces \( V \) and \( W \), and their associated projective spaces \( \mathbb{P}(V) \) and \( \mathbb{P}(W) \). The Segre embedding is the canonical map:
\[
\sigma : \mathbb{P}(V) \times \mathbb{P}(W) \hookrightarrow \mathbb{P}(V \otimes W),
\]
given by:
\[
\sigma([v], [w]) = [v \otimes w],
\]
where \( [v] \in \mathbb{P}(V) \) and \( [w] \in \mathbb{P}(W) \) are projective points corresponding to nonzero vectors \( v \in V \) and \( w \in W \).

In projective coordinates, if we take  \(  [v] =[x_0: \dots : x_m] \in \mathbb{P}^m \)  and \([w]=[y_0: \dots : y_n]  \in  \mathbb{P}^n \) then the embedding is expressed as:
\[
\sigma : \mathbb{P}^m \times \mathbb{P}^n \hookrightarrow \mathbb{P}^{mn+m+n},
\]
\[
\sigma([x_0: \dots : x_m], [y_0: \dots : y_n]) \mapsto [x_0y_0 : x_0y_1 : \dots : x_my_n].
\]

The image of \( \sigma \), known as the \emph{Segre variety}, is characterized by the vanishing of all \( 2 \times 2 \) minors of the rank-one matrix:
\[
\begin{pmatrix}
x_0y_0 & x_0y_1 & \dots & x_0y_n \\
x_1y_0 & x_1y_1 & \dots & x_1y_n \\
\vdots & \vdots & \ddots & \vdots \\
x_my_0 & x_my_1 & \dots & x_my_n
\end{pmatrix}.
\]

This means that each coordinate $[z_{00}:\cdots: z_{mn}]\in \mathbb{P}^{mn+m+n}$ in the image of $\sigma$ is given by a monomial of the form: $z_{ij}=x_iy_j$. 

\subsubsection{Example}
We take the case 
\[
\sigma : \mathbb{P}^2 \times \mathbb{P}^2 \hookrightarrow \mathbb{P}^{8},
\]
\[
\sigma : ([x_0:x_1:x_2] ,[y_0:y_1:y_2]) \mapsto [x_0y_0:x_0y_1:\cdots:x_2y_2],
\]
Let 

\[M=\begin{bmatrix}
z_{00} & z_{01} & z_{02} \\
z_{10} & z_{11} & z_{12} \\
z_{20} & z_{21} & z_{22}
\end{bmatrix}
\]
be a rank-one  \(3 \times 3\) matrix, where $z_{ij}=x_iy_j$.

\textbf{The \(2 \times 2\) minors (determinants of \(2 \times 2\) submatrices):}
\[
\begin{aligned}
&\det \begin{bmatrix} z_{00} & z_{01} \\ z_{10} & z_{11} \end{bmatrix} = z_{00}z_{11} - z_{01}z_{10}, \\
&\det \begin{bmatrix} z_{00} & z_{02} \\ z_{10} & z_{12} \end{bmatrix} = z_{00}z_{12} - z_{02}z_{10}, \\
&\det \begin{bmatrix} z_{01} & z_{02} \\ z_{11} & z_{12} \end{bmatrix} = z_{01}z_{12} - z_{02}z_{11}, \\
&\det \begin{bmatrix} z_{00} & z_{01} \\ z_{20} & z_{21} \end{bmatrix} = z_{00}z_{21} - z_{01}z_{20}, \\
&\det \begin{bmatrix} z_{00} & z_{02} \\ z_{20} & z_{22} \end{bmatrix} = z_{00}z_{22} - z_{02}z_{20}, \\
&\det \begin{bmatrix} z_{01} & z_{02} \\ z_{21} & z_{22} \end{bmatrix} = z_{01}z_{22} - z_{02}z_{21}, \\
&\det \begin{bmatrix} z_{10} & z_{11} \\ z_{20} & z_{21} \end{bmatrix} = z_{10}z_{21} - z_{11}z_{20}, \\
&\det \begin{bmatrix} z_{10} & z_{12} \\ z_{20} & z_{22} \end{bmatrix} = z_{10}z_{22} - z_{12}z_{20}, \\
&\det \begin{bmatrix} z_{11} & z_{12} \\ z_{21} & z_{22} \end{bmatrix} = z_{11}z_{22} - z_{12}z_{21}.
\end{aligned}
\]

These equations define the Segre variety. 
\subsubsection{The particular case of $\mathbb{P}^1\times  \mathbb{P}^1$}
For $m=n=1$, the Segre map is:

 \[
\sigma([x_0: x_1], [y_0: y_1]) \mapsto [x_0y_0 : x_0y_1 : x_1y_0 : x_1y_1].
\]

The new coordinates satisfy the determinantal equation:
\[z_{00}z_{11}-z_{10}z_{01}=0\]
which defines a Segre variety in $\mathbb{P}^3$, and this generates a toric ideal. However, note that this is an exceptional case and that not all Segre varieties are necessarily toric. 

\subsubsection{Connection Between the Segre Variety and Dixon Elliptic Curves}
The case of $m=n=1$, also holds important relations to Dixon elliptic curves.
The Segre variety $S_{1,1}=\{z_{00}z_{11}-z_{10}z_{01}=0\}$ is a quadric surface, and Dixon elliptic curves are obtained by intersecting this quadric surface $S$ with another quadric surface. Namely, the Dixon elliptic curve is the intersection of 
S with a second quadric surface say $Q(z_{00},z_{01},z_{10},z_{11})=0$

Thus, Dixon elliptic curves are smooth complete intersections of the Segre variety with another quadric surface in the projective space $\mathbb{P}^3.$

\section{Separable and Entangled States}
We go back to our example of a two-qubit pure state. This  can be written as
\begin{equation}
    |\Psi\rangle = \sum_{i,j=0}^{1} c_{ij} |i\rangle |j\rangle,
\end{equation}
where $c_{ij}$ are the complex coefficients. 

\,

The homogeneous coordinates in $\mathbb{ P}^{3}$ are given by: $[Z^{0}: Z^{1}: Z^{2}: Z^{3}]$ where  $(Z^{0}, Z^{1}, Z^{2}, Z^{3})$ is identified with the four-tuple $(c_{00}, c_{01}, c_{10}, c_{11})$.

\begin{lem}
    Let \( |\psi\rangle_{AB} \) be a  two-qubit pure state. Let $\mathcal{H}_A$ and  $\mathcal{H}_B$  be the respective Hilbert spaces of the subsystems  $A$  and  $B$. 
    
    The pure state \( |\psi\rangle_{AB} \) is separable if and only if the corresponding coordinates 
    \[[Z^{0}: Z^{1}: Z^{2}: Z^{3}] \in \mathbb{P}^3\quad \text{of}\quad  |\psi\rangle_{AB} \] lie on a \emph{Segre variety}, which is the image of the canonical embedding:
\[
\mathbb{P}(\mathcal{H}_A) \times \mathbb{P}(\mathcal{H}_B) \hookrightarrow \mathbb{P}^3.
\]
Explicitly, the Segre variety is described as the locus of points that satisfy the quadratic equation:
\[
\det
\begin{pmatrix}
c_{00} & c_{01} \\
c_{10} & c_{11}
\end{pmatrix}
= 0,
\]
where \( |\psi\rangle_{AB} = c_{00}|00\rangle + c_{01}|01\rangle + c_{10}|10\rangle + c_{11}|11\rangle \).
\end{lem}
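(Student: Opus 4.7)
The plan is to prove both directions of the ``if and only if'' by unpacking the definitions of separability, the Segre embedding, and the rank-one condition encoded by the vanishing of the $2\times 2$ determinant. The two directions are essentially symmetric, each amounting to the linear algebra statement that a $2\times 2$ matrix has rank at most one precisely when it factors as an outer product of two vectors.

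For the forward direction, I would assume that $|\psi\rangle_{AB}$ is separable, so by the definition given earlier it decomposes as
\[
|\psi\rangle_{AB} = \bigl(a_0 |0\rangle + a_1 |1\rangle\bigr) \otimes \bigl(b_0 |0\rangle + b_1 |1\rangle\bigr).
\]
Expanding the tensor product and comparing with $|\psi\rangle_{AB} = \sum_{i,j} c_{ij} |ij\rangle$ yields $c_{ij} = a_i b_j$, which is exactly the Segre parametrization $\sigma([a_0:a_1],[b_0:b_1])$. A direct substitution then shows $c_{00}c_{11} - c_{01}c_{10} = a_0 b_0 a_1 b_1 - a_0 b_1 a_1 b_0 = 0$, so the projective point $[c_{00}:c_{01}:c_{10}:c_{11}]$ lies on the Segre variety described in the previous section.

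For the converse, I would start from the assumption that the determinant vanishes, meaning the $2\times 2$ coefficient matrix $C = (c_{ij})$ has rank at most one. Standard linear algebra then gives column vectors $a = (a_0,a_1)^T$ and $b = (b_0,b_1)^T$ with $C = a b^T$, equivalently $c_{ij} = a_i b_j$. Substituting back into $|\psi\rangle_{AB}$ and collecting terms recovers the tensor-product form $(a_0|0\rangle + a_1|1\rangle)\otimes(b_0|0\rangle + b_1|1\rangle)$, so $|\psi\rangle_{AB}$ is separable. One should remark on the trivial case $C = 0$, which is excluded since $|\psi\rangle_{AB}$ is a nonzero state defining a projective point, and note that the factorization $a,b$ is only determined up to the scalar ambiguity $(a,b)\mapsto (\lambda a, \lambda^{-1} b)$, consistent with the fact that $\sigma$ is well defined on projective classes.

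The main obstacle is essentially bookkeeping rather than a deep difficulty: one must make sure the projective language is handled cleanly, in particular that the equivalence is stated at the level of projective points (so that the rescaling freedom in $a,b$ does not disturb the correspondence) and that the Segre embedding is genuinely injective on $\mathbb{P}(\mathcal{H}_A)\times\mathbb{P}(\mathcal{H}_B)$, so that a separable factorization exists whenever the minor vanishes. Once these points are clearly stated, the lemma follows immediately from the characterization of the Segre variety via the vanishing of $2\times 2$ minors, which has already been established in the preceding section.
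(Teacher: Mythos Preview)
Your argument is correct and rests on the same core observation as the paper's: a separable state has coefficients $c_{ij}=a_ib_j$, which immediately kills the $2\times 2$ determinant, and conversely a point on the Segre variety comes from such a factorization. The paper phrases the forward implication as a contradiction (assume a separable state with nonvanishing determinant and arrive at $(\alpha\gamma)(\beta\delta)\neq(\alpha\delta)(\beta\gamma)$) and disposes of the converse in one line as ``due to the construction'' of the Segre map; you instead argue both directions directly and make the converse explicit via the rank-one factorization $C=ab^{T}$, also recording the projective scaling ambiguity $(a,b)\mapsto(\lambda a,\lambda^{-1}b)$ and the exclusion of $C=0$. The mathematical substance is identical; your version is simply more carefully and completely stated.
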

\begin{proof} 
One direction is easy. Indeed, it is straightforward to check by contraposition that if $[Z^{0}: Z^{1}: Z^{2}: Z^{3}]$ does not lie on a Segre variety then it cannot correspond to a pure separable state. This is essentially due to the construction. 

The other direction, can be done by contradiction. Assume that the coefficients of the vector corresponding to a pure separated state $|\psi\rangle$ do not lie on the Segre variety. Therefore, $c_{00}c_{11}-c_{01}c_{10}\neq0$. However, by construction $c_{00}=\alpha\gamma$, $c_{11}=\beta\delta$, $c_{01}=\alpha\delta$,$c_{10}=\beta\gamma$. By hypothesis, we have $(\alpha|0\rangle+\beta|1\rangle)\otimes(\gamma|0\rangle+\delta|1\rangle)$ which is equal to $\alpha|\gamma00\rangle+\alpha\delta|01\rangle+\beta\gamma |10\rangle+\beta\gamma |11\rangle)$. But having $c_{00}c_{11}-c_{01}c_{10}\neq0$ implies that 
$(\alpha\gamma)\cdot(\beta\delta)\neq(\alpha\delta)\cdot(\beta\gamma)$ which is a contradiction. 
\end{proof}

Therefore, 
\begin{cor}A state corresponding to a 2-qubit is separable if and only if
\begin{equation} Z^0Z^3 - Z^1Z^2 = 0.
\label{segre_condition}
\end{equation}
This equation defines the Segre variety in $\mathbb{P}^{3}$.
\end{cor}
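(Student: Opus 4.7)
The plan is to derive this corollary directly from the preceding lemma by a change of notation. First, I would recall the identification set up just before the lemma: the homogeneous coordinates $[Z^0:Z^1:Z^2:Z^3]\in\mathbb{P}^3$ of the two-qubit pure state $|\psi\rangle_{AB}=\sum_{i,j=0}^{1}c_{ij}|ij\rangle$ are given by $(Z^0,Z^1,Z^2,Z^3)=(c_{00},c_{01},c_{10},c_{11})$. Under this identification, the $2\times 2$ determinant appearing in the lemma,
\[
\det\begin{pmatrix} c_{00} & c_{01} \\ c_{10} & c_{11}\end{pmatrix}=c_{00}c_{11}-c_{01}c_{10},
\]
becomes literally $Z^0Z^3-Z^1Z^2$.

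Next, I would invoke the lemma to conclude that $|\psi\rangle_{AB}$ is separable if and only if this quantity vanishes, which is exactly the stated equation \eqref{segre_condition}. This establishes the biconditional part of the corollary with no further work.

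Finally, for the claim that this single quadric defines the Segre variety in $\mathbb{P}^3$, I would cite the explicit computation already carried out in the subsection on the case $\mathbb{P}^1\times\mathbb{P}^1$: the image of $\sigma:\mathbb{P}^1\times\mathbb{P}^1\hookrightarrow\mathbb{P}^3$ is cut out by $z_{00}z_{11}-z_{10}z_{01}=0$, which coincides with $Z^0Z^3-Z^1Z^2=0$ under the same labeling of indices. The only point to handle with care—hardly an obstacle—is ensuring that the ordering convention on the homogeneous coordinates of $\mathbb{P}^3$ used here matches the one used in the Segre subsection, so that the quadric appears in the form displayed and not in a permuted version; this is immediate from the identification $Z^k \leftrightarrow c_{ij}$ fixed above.
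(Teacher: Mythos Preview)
Your proposal is correct and matches the paper's approach: the corollary is stated immediately after the lemma with only the word ``Therefore,'' so the paper itself treats it as a direct restatement of the lemma under the coordinate identification $(Z^0,Z^1,Z^2,Z^3)=(c_{00},c_{01},c_{10},c_{11})$, exactly as you outline.
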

So, pure separable states are precisely those that lie on this variety; entangled states correspond to points outside this locus.

The geometric nature of entanglement thus becomes apparent: an entangled state is one that deviates from the Segre variety, and its degree of entanglement can be analyzed through its projection onto this variety.
Therefore, separable states form a four (real) dimensional manifold of a six (real) dimensional space of all states. 

\,
\subsubsection{Geometric description}
We describe the previous construction furthermore. Consider a hyperoctant in a 3-sphere and the following coordinates: 

$(c_{00},c_{01},c_{01},c_{11})=(a_{00},a_{01}e^{\imath v_1},a_{01}e^{\imath v_2},a_{11}e^{\imath v_3}),$ where 
\[ a_{00}^2+\cdots +a_{11}^2=1\]

\,

Using a \emph{ gnomonic projection} of the 3-sphere  centered at:  $$(a_{00},a_{01},a_{10},a_{11})=\frac{1}{2}(1,1,1,1),$$

the Segre embedding can be nicely described in the octant picture.

The complex equation $c_{00}c_{11}-c_{01}c_{10}=0$ splits into 2 real equations:

\[
a_{00}a_{11}-a_{01}a_{10}=0 \]
\[ v_1+v_2-v_3=0    \]

Hence, one may envision the space of separable states as a two-dimensional algebraic surface embedded within the octant. Over each separable point in this octant, a corresponding two-dimensional surface resides in the torus.

\,

This concept can be generalized to $n$-qubit systems, where separability structures correspond to higher Segre embeddings. This exhibits altogether the structure of a hypercube of Segre embeddings. 

\subsection{Characterization of $n$-qubit Systems via Segre Embeddings and Hypercubes}

\subsubsection{Case: $N=3$}
\begin{lem}
The quantum state \( |\psi\rangle \) of a 3-qubit is said to be separable with respect to the tensor decomposition \( A \otimes B \otimes C \) if and only if its associated point \([ \psi ] \in \mathbb{P}^7\) lies on the generalized \emph{Segre variety}. This variety arises naturally as the image of the Segre embedding:
\[
\mathbb{P}(\mathcal{H}_A) \times \mathbb{P}(\mathcal{H}_B) \times \mathbb{P}(\mathcal{H}_C) \hookrightarrow \mathbb{P}(\mathcal{H}_A \otimes \mathcal{H}_B \otimes \mathcal{H}_C),
\]
where \( \mathcal{H}_A \), \( \mathcal{H}_B \), and \( \mathcal{H}_C \) are the respective Hilbert spaces of the subsystems \( A \), \( B \), and \( C \). 
\end{lem}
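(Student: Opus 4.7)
My plan is to mirror the proof of the two-qubit lemma by exploiting the associativity $\mathcal{H}_A \otimes \mathcal{H}_B \otimes \mathcal{H}_C \cong \mathcal{H}_A \otimes (\mathcal{H}_B \otimes \mathcal{H}_C)$, reducing the tri-partite case to the already-proven bipartite one. First I would fix coordinates by writing
\[
|\psi\rangle = \sum_{i,j,k \in \{0,1\}} c_{ijk}\, |ijk\rangle
\]
and identifying $[\psi] \in \mathbb{P}^7$ with $[c_{000}:c_{001}:\cdots:c_{111}]$. The image of the three-fold Segre embedding $\mathbb{P}^1 \times \mathbb{P}^1 \times \mathbb{P}^1 \hookrightarrow \mathbb{P}^7$ is cut out by the vanishing of all $2\times 2$ minors of each of the three flattenings of the tensor $(c_{ijk})$, and I would list these explicitly before starting the main argument so that the subsequent reductions have something concrete to refer to.

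For the easy direction I would simply substitute $c_{ijk} = \alpha_i \beta_j \gamma_k$ into any such minor and observe that it is a difference of two identical products, so every minor vanishes identically and the point automatically lies in the image.

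For the converse I would factorize in two steps via the chain
\[
\mathbb{P}^1 \times \mathbb{P}^1 \times \mathbb{P}^1 \;\hookrightarrow\; \mathbb{P}^1 \times \mathbb{P}^3 \;\hookrightarrow\; \mathbb{P}^7.
\]
Regarding $(c_{ijk})$ along the bipartition $A \,|\, (BC)$, I view it as a $2 \times 4$ matrix whose rank-one condition is exactly the Segre condition for $\mathbb{P}^1 \times \mathbb{P}^3$; by an immediate extension of the two-qubit lemma this produces $|\psi\rangle = |\alpha\rangle \otimes |\phi_{BC}\rangle$, where $|\phi_{BC}\rangle = \sum d_{jk}|jk\rangle \in \mathcal{H}_B \otimes \mathcal{H}_C$. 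Working in an affine chart where some $\alpha_i \neq 0$, the remaining minors of the original flattenings translate into the single quadratic equation $d_{00}d_{11} - d_{01}d_{10} = 0$, so that a second invocation of the two-qubit lemma yields $|\phi_{BC}\rangle = |\beta\rangle \otimes |\gamma\rangle$, completing the separable decomposition.

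The main obstacle I anticipate is the bookkeeping step showing that the Segre minors mixing indices from different flattenings of $(c_{ijk})$ really do reduce, after extracting $|\alpha\rangle$, to the two-qubit Segre equation on $|\phi_{BC}\rangle$ and not to additional independent constraints. This must be combined with treating the projective edge cases where various $\alpha_i$ vanish, which I would handle by symmetry by passing to the opposite affine chart. Once this reduction is carried out, every remaining step is a direct appeal to the $\mathbb{P}^1 \times \mathbb{P}^1$ lemma already established.
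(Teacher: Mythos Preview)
Your proposal is correct and follows the same approach the paper indicates: the paper's own proof consists only of the sentence ``The proof is a straightforward computation, that mimics essentially the case where $N=2$,'' followed immediately by the commutative square factoring $\mathbb{P}^1\times\mathbb{P}^1\times\mathbb{P}^1\hookrightarrow\mathbb{P}^7$ through $\mathbb{P}^1\times\mathbb{P}^3$, which is exactly the two-step reduction you spell out. Your write-up simply supplies the details the paper omits, including the flattening bookkeeping and the affine-chart edge cases; nothing in your plan departs from the paper's route.
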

\begin{proof}
The proof is a straightforward computation, that mimics essentially the case where $N=2$. 
\end{proof}

We illustrate this using the following diagram: 
\[
\begin{tikzcd}
\bP^1 \times \bP^1 \times \bP^1  \arrow[r, "\sigma_{A,B}\otimes I_C"] \arrow[d, "I_A \otimes \sigma_{B,C}"'] & \bP^3 \times \bP^1  \arrow[d, "\sigma_{A,B\otimes C}"] \\
\bP^1 \times \bP^3   \arrow[r, "\sigma_{A,B\otimes C}"] & \bP^7 
\end{tikzcd}
\]

In this perspective, separability is geometrically encoded as the property that the coefficients of the state vector \( |\psi\rangle \) lies on the Segre variety rather than in its complement, which is geometrically interpreted as the space of entangled states.

\subsubsection{General case}
\begin{thm}

Let \( \mathcal{H}_2 \) denote the 2-dimensional Hilbert space of a single qubit. 

The pure state \( |\psi\rangle \) of a $n$-qubit is a separable state if and only if the associated coordinate points in $\mathbb{P}^{2^n - 1}$ lie on the generalized \emph{Segre variety}. Furthermore, it is a product state if it is parametrized by a \(n-1\)-hypercube of Segre embeddings given by:

\[
\mathbb{P}^1 \times \mathbb{P}^1 \times \dots \times \mathbb{P}^1 \hookrightarrow \mathbb{P}^{2^n - 1}.
\]

\end{thm}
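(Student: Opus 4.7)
The plan is to proceed by induction on the number of qubits $n$, with the base case $n=2$ already handled by the preceding Lemma and Corollary, and the case $n=3$ treated in the Lemma directly above via the commutative square of Segre maps. Throughout, I would work with the fully flattened coordinates $[c_{i_1 i_2 \cdots i_n}] \in \mathbb{P}^{2^n-1}$ indexed by binary strings of length $n$, so that a pure product state has coefficients of the factored form $c_{i_1 \cdots i_n} = a^{(1)}_{i_1} a^{(2)}_{i_2} \cdots a^{(n)}_{i_n}$.

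For the inductive step, assume the statement for all values less than $n$. The key observation is that a separable $n$-qubit state $|\psi\rangle$ admits a bipartite factorization $|\psi\rangle = |\phi\rangle \otimes |\chi\rangle$ with $|\phi\rangle \in \mathcal{H}_2^{\otimes k}$ and $|\chi\rangle \in \mathcal{H}_2^{\otimes (n-k)}$ for some $1 \le k < n$. Applying the induction hypothesis to each factor and then composing with the binary Segre embedding $\mathbb{P}^{2^k-1} \times \mathbb{P}^{2^{n-k}-1} \hookrightarrow \mathbb{P}^{2^n-1}$ yields the full embedding $(\mathbb{P}^1)^n \hookrightarrow \mathbb{P}^{2^n-1}$, establishing the forward direction. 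For the reverse direction, I would mimic the contraposition argument of the 2-qubit Lemma: if $[c_{i_1\cdots i_n}]$ lies on the generalized Segre variety, then at least one bipartite flattening of the tensor $(c_{i_1\cdots i_n})$ into a matrix has all its $2\times 2$ minors vanishing, hence has rank one, yielding a factorization $|\psi\rangle = |\phi\rangle \otimes |\chi\rangle$ onto which induction applies.

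The hypercube structure emerges naturally from this bipartite analysis: between the $n$ adjacent factors of $(\mathbb{P}^1)^n$ lie exactly $n-1$ ``gaps'', and each gap can be collapsed by a single Segre operation. The vertices of the diagram are therefore indexed by subsets $S \subseteq \{1,\ldots,n-1\}$, or equivalently by binary words of length $n-1$, recording which gaps have already been processed; edges join subsets that differ by inclusion of a single gap. I would verify commutativity of every $2$-face of this hypercube by direct coordinate computation: for two non-adjacent gaps the Segre operations act on disjoint coordinate blocks and commute trivially, while for two gaps sharing a common factor the commutativity is exactly the associativity isomorphism $(A \otimes B) \otimes C \cong A \otimes (B \otimes C)$ applied at the level of projective coordinates. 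The source vertex $S = \emptyset$ is $(\mathbb{P}^1)^n$, the sink vertex $S = \{1, \ldots, n-1\}$ is $\mathbb{P}^{2^n-1}$, and any directed path from source to sink realizes the full Segre embedding as an ordered composition of $n-1$ elementary ones.

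The main obstacle will be the bookkeeping for the reverse direction of the induction: one must argue that the generalized Segre variety, when described as the common vanishing locus of the $2\times 2$ minors coming from every Segre embedding in the hypercube, coincides with the locus of tensors of rank one across every bipartition, so that the splitting $|\psi\rangle = |\phi\rangle \otimes |\chi\rangle$ is actually available at each inductive step. Once this multilinear-algebra fact is in hand, both the iff characterization and the hypercube commutativity fall out by essentially mechanical verification, extending the $n=2$ and $n=3$ cases treated earlier.
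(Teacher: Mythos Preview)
Your proposal is correct and follows essentially the same route as the paper: induction on $n$ with the low cases as base, and the encoding of intermediate Segre stages by binary words of length $n-1$ to exhibit the $(n-1)$-hypercube. The paper leans more heavily on worked examples ($n=3,4,5$) and on the Hamming-distance description of adjacency, whereas you give more explicit attention to two points the paper handles only implicitly or by example: the reverse implication of the separability criterion via rank-one flattenings, and the commutativity of the $2$-faces of the hypercube via associativity of the tensor product. These additions are genuine improvements in rigor rather than a different strategy.

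One small correction: in your reverse direction you write that ``at least one bipartite flattening'' has rank one. For a point on the full Segre variety every flattening has rank one, and you in fact need this stronger statement for the induction to go through---having rank one across a single bipartition only gives $|\phi\rangle\otimes|\chi\rangle$ without guaranteeing that $|\phi\rangle$ and $|\chi\rangle$ themselves lie on the smaller Segre varieties. You correctly flag exactly this issue as the ``main obstacle'' in your final paragraph, so you are aware of it; just make sure the earlier sentence is phrased consistently.
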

\begin{proof}
\subsubsection*{Preliminary Notation and Inductive Construction}

Before proceeding with the proof, we introduce some notation. The symbol \(\sigma_{A,B}\) represents the Segre embedding of the projective spaces associated with the Hilbert spaces \(\mathbb{P}(\mathcal{H}_A)\) and \(\mathbb{P}(\mathcal{H}_B)\), given by:
\[
\sigma_{A,B}: \mathbb{P}(\mathcal{H}_A) \times \mathbb{P}(\mathcal{H}_B) \hookrightarrow \mathbb{P}(\mathcal{H}_A \otimes \mathcal{H}_B).
\]
Additionally, the symbol \(I\) denotes the identity morphism.

\subsubsection*{Inductive Construction of the Embedding Cube - low dimensions}

The proof of this construction follows by induction. The embedding cube is constructed inductively as follows.

\( 1.\) The Segre embedding \[\mathbb{P}^1 \times \mathbb{P}^1  \hookrightarrow \mathbb{P}^3\] illustrates an edge connecting two vertices of the embedding diagram. This is therefore a 1-cube (a cube of dimension one). 

\, 
\( 2.\) Suppose that we have now $n=3$, that is:  \[\underbrace{\bP^1 \times \bP^1 \times \bP^1}_{3}  \hookrightarrow \mathbb{P}^7.\] 
Using the previous step \( 1\) to compute the Segre embedding diagram yields a 2-cube structure of embeddings:
\[
\xymatrix{
\PP^1_A\times \PP^1_B\times \PP^1_C\ar[d]_{{I}_A\otimes \sigma_{B,C}}
\ar[rr]^{\sigma_{A,B}\otimes {I}_C}&&\PP^3_{AB}\times \PP^1_C\ar[d]^{\sigma_{A,B\otimes C}}\\
\PP^1_A\times \PP^3_{BC}\ar[rr]^{\sigma_{A\otimes BC}}&&\PP^7_{ABC}
}.
\]

\, 
\( 3.\) Suppose that we have now $n=4$, that is we have \[\underbrace{\bP^1 \times \cdots \times \bP^1}_{4}  \hookrightarrow \mathbb{P}^{15}.\] 
Using the previous step \( 2\) to compute the Segre embedding diagram yields a 3-cube structure of embeddings:
\[ \xymatrix{ \PP^1\times\PP^1\times\PP^1\times\PP^1 \ar[dd]_{\sigma_{1,1}\times I\times I}\ar[rd]^{I\times \sigma_{1,1}\times {I}} \ar[rr]^{I\times I\times \sigma_{1,1}} && 
\PP^1\times\PP^1\times\PP^3 \ar'[d][dd]^{\sigma_{1,1}\times I} \ar[rd]^{I\times \sigma_{1,3}} \\
& \PP^1\times\PP^3\times\PP^1 \ar[dd]_(.3){\sigma_{1,3}\times I} \ar[rr]^(.4){ I\times \sigma_{3,1}} && \PP^1\times\PP^7 \ar[dd]^{\sigma_{1,7}} \\
\PP^3\times\PP^1\times\PP^1 \ar'[r][rr]^(.2){ I\times \sigma_{1,1}} \ar[rd]^{\sigma_{3,1}\times I}  && \PP^3\times\PP^3 \ar[rd]^{\sigma_{3,3}} \\
& \PP^7\times\PP^1 \ar[rr]^{\sigma_{7,1}} && \PP^{15} }.
\]

\subsubsection*{Segre Embedding as a 4-Cube Diagram for \( n=5 \)}

\( 4.\) Suppose that we have \( n=5 \), i.e., 
\[
\underbrace{\mathbb{P}^1 \times \cdots \times \mathbb{P}^1}_{5} \hookrightarrow \mathbb{P}^{31}.
\]

Applying our inductive construction, the Segre embedding diagram follows the structure of a 4-cube. Therefore, the structure of the Segre embedding for \( n=5 \) follows the pattern of a 4-dimensional cube, as illustrated in the diagram Figure \ref{F:4-cube}.  For simplicity we have omitted the labels on the morphisms.

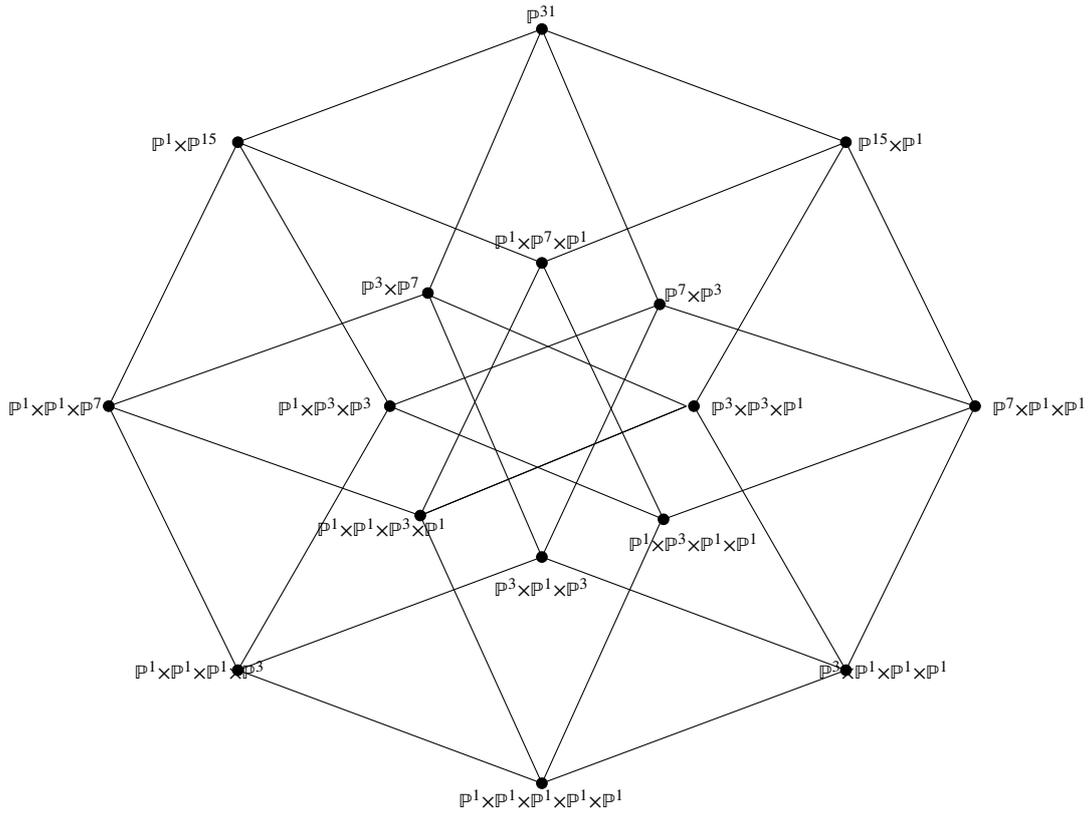
\begin{figure}
\begin{center} 
     \begin{tikzpicture}[scale=1.0]
  \draw[fill] (-0, 5) circle (.07cm);  \node (-0,5) at (-0,5.2) {$\scriptstyle \PP^{31}$};
  \node (-0,5) at (-0,5.6) {$$};
  \draw[fill] (-0, 1.9) circle (.07cm); \node (-0,19) at (-0,2.2) {$\scriptstyle \PP^1\times\PP^7\times\PP^1$};
  \draw[fill] (-0, -2) circle (.07cm);  \node (-0,-2) at (-0,-2.4) {$\scriptstyle \PP^3\times\PP^1\times\PP^3$};
  \draw[fill] (-0, -5) circle (.07cm);  \node (-0,-5) at (-0,-5.2) {$\scriptstyle \PP^1\times\PP^1\times\PP^1\times\PP^1\times\bP^1$};
   \node (-0,-5) at (-0,-5.6) {$$};
 \draw[fill] (-4, 3.5) circle (.07cm);  \node (-4,3.5) at (-4.7,3.5) {$\scriptstyle \PP^1\times\PP^{15}$};
 \draw[fill] (4, 3.5) circle (.07cm);  \node (4,3.5) at (4.6,3.5) {$\scriptstyle \PP^{15}\times\PP^1$};
 \draw[fill] (-4, -3.5) circle (.07cm);  \node (-4,5) at (-4.5,-3.5) {$\scriptstyle \PP^1\times\PP^1\times\PP^1\times \PP^3$};
 \draw[fill] (4, -3.5) circle (.07cm);  \node (4,-3.5) at (4.5,-3.5) {$\scriptstyle \PP^3\times\PP^1\times\PP^1\times \PP^1$};
  \draw[fill] (-5.7, 0) circle (.07cm);  \node (-5.7,0) at (-6.4,0) {$\scriptstyle \PP^1\times\PP^1\times \PP^7$};
 \draw[fill] (5.7, 0) circle (.07cm);  \node (5.7,0) at (6.2,0) {$\scriptstyle \quad \quad \PP^{7}\times \PP^1\times  \PP^1$};
  \draw[fill] (-2, 0) circle (.07cm);  \node (-2,0) at (-2.5,0) {$\scriptstyle \quad\quad \PP^1\times\PP^3\times \PP^3 \quad \quad\quad \quad$};
 \draw[fill] (2, 0) circle (.07cm);  \node (1.9,0) at (2.5,0) {$\scriptstyle  \quad\quad\PP^3\times\PP^3\times \PP^1$};
  \draw[fill] (-1.5, 1.5) circle (.07cm);  \node (-1.5,1.5) at (-1.99,1.6) {$\scriptstyle \PP^3\times\PP^7$};
\draw[fill] (1.6, -1.5) circle (.07cm);  \node (1.6,-1.5) at (2,-1.8) {$\scriptstyle \PP^1\times \PP^3\times\PP^1\times\PP^1$};
\draw[fill] (-1.6, -1.45) circle (.07cm);  \node (-1.6,-1.345) at (-2.1,-1.6) {$\scriptstyle \PP^1\times\PP^1\times \PP^3 \times \PP^1$};
\draw[fill] (1.55, 1.35) circle (.07cm);  \node (1.55,1.35) at (2,1.5) {$\scriptstyle \PP^7\times\PP^3$};

               \draw[black] (-0,5) -- (-4,3.5);
               \draw[black] (-0,5) -- (4,3.5);   
                   \draw[black] (-4,3.5) -- (-0,1.9);
                    \draw[black] (-0,1.9) -- (4,3.5);
                    \draw[black] (-0,-5) -- (-4,-3.5);
               \draw[black] (-0,-5) -- (4,-3.5);    
                   \draw[black] (-4,-3.5) -- (-0,-2);
                    \draw[black] (-0,-2) -- (4,-3.5); 
        \draw[black] (-5.7,0) -- (-4,-3.5);
          \draw[black] (5.7,0) -- (4,3.5);
         \draw[black] (5.7,0) -- (4,-3.5);
         \draw[black] (0,5) -- (-1.5,1.5);
        \draw[black] (2,0) -- (4,3.5);
         \draw[black] (2,0) -- (4,-3.5);
          \draw[black] (-5.7,0) -- (-4,3.5);
         \draw[black] (-2,0) -- (-4,-3.5);
         \draw[black] (-2,0) -- (-4,3.5);
         \draw[black] (-1.5,1.5) -- (-5.7,0);
          \draw[black] (1.6,-1.5) -- (5.7,0);
           \draw[black] (-1.5,1.5) -- (-0,-2);
           \draw[black] (1.6,-1.5) -- (-0,1.9);
                \draw[black] (0,-5) -- (-1.6,-1.45);
               \draw[black] (-5.7,0) -- (-1.6,-1.45);
                \draw[black] (1.9,0) -- (-1.6,-1.45);
                 \draw[black] (0,5) -- (1.55, 1.35);
                 \draw[black] (5.7,0) -- (1.55, 1.35);
                  \draw[black] (0,-5) -- (1.6, -1.45);
                   \draw[black] (0,-2) -- (1.55, 1.35);
                    \draw[black] (-2,0) -- (1.6, -1.5);
                    \draw[black] (-2,0) -- (1.55, 1.35);
                  \draw[black] (0,1.9) -- (-1.6, -1.45);
                  \draw[black] (1.9,0) -- (-1.6, -1.45);
\draw[black] (1.9,0) -- (-1.55, 1.5);

                  \end{tikzpicture}
                  \smallskip 
                  \caption{4-cube of Segre embeddings}\label{F:4-cube}
\end{center}
\end{figure}
\subsubsection*{General construction}
Our construction continues by a combinatorial argument. 
Let us consider the Segre embeddings, that is, embeddings of products of $n$ projective spaces into higher-dimensional projective spaces, given by the classical Segre maps. We seek to construct a bijection between the set of such embeddings and a combinatorial set $\mathcal{W}_{n-1}=\{w=(w_1,\cdots,w_{n-1})\, |\, w_i\in \{0,1\}\}$ of words of length $n-1$, formed from the alphabet $\{0,1\},$ encoding the sequence of Segre operations.

\, 
Let $(X_1,\cdots, X_n)$ be a sequence of projective spaces, each $X_i$ is of the form $\PP^{m_i}$, indexed by $i$. The iterated Segre embedding of these spaces into a higher-dimensional projective space follows a well-defined hierarchical construction:
\[
\PP^{m_1}\times \cdots \times \PP^{m_n}\hookrightarrow \PP^{N},\] where $N=\prod_{i=1}^n(m_i+1)-1$ and each embedding at an intermediate stage corresponds to the standard Segre map.

\subsubsection*{Encoding via Binary Words}
To each Segre embedding process, we associate a word 
$w$ of length $n-1$, whose letters belong to the alphabet 
$\{0,1\}$, as follows:
\begin{enumerate}
\item If a Segre embedding $ \PP^{m_1}\times \cdots \times \underbrace{\PP^{m_i}\times \PP^{m_{i+1}}}_i \times \cdots \times \PP^{m_n}$ is performed at step 
$i$, we set $w_i=1$. That is, if at the $i$-th stage, we embed two factors together via the Segre construction, this operation is encoded by a 
1 in the word $w$. In particular,  if initially the word was $$w=(w_1,\cdots, w_{i-1},\underbrace{0}_i,w_{i+1},\cdots w_{n-1}),$$ where $w_i\in\{0,1\}$ after the Segre embedding the new word becomes 
$$w=(w_1,\cdots, w_{i-1},\underbrace{1}_i,w_{i+1},\cdots w_{n-1}).$$
\item If no Segre embedding has occured for the $i$-th pair of projective spaces, we set 
the letter $w_i=0$. This means that the factors in the corresponding product $\PP^{m_1}\times \cdots \times \PP^{m_n}$ have not proceeded to a Segre embedding.
\end{enumerate}
Going back to the 4-cube, we give an example where the vertices are labelled by the binary words: 

\begin{center} 
     \begin{tikzpicture}[scale=1.0]
  \draw[fill] (-0, 5) circle (.07cm);  \node (-0,5) at (-0,5.2) {$\scriptstyle (1111)$};
  \node (-0,5) at (-0,5.6) {$\scriptstyle$};
  \draw[fill] (-0, 1.9) circle (.07cm); \node (-0,19) at (-0,2.2) {$\scriptstyle (0110)$};
  \draw[fill] (-0, -2) circle (.07cm);  \node (-0,-2) at (-0,-2.4) {$\scriptstyle (1001)$};
  \draw[fill] (-0, -5) circle (.07cm);  \node (-0,-5) at (-0,-5.2) {$\scriptstyle (0000)$};
   \node (-0,-5) at (-0,-5.6) {$\scriptstyle $};
 \draw[fill] (-4, 3.5) circle (.07cm);  \node (-4,3.5) at (-4.7,3.5) {$\scriptstyle (0111)$};
 \draw[fill] (4, 3.5) circle (.07cm);  \node (4,3.5) at (4.6,3.5) {$\scriptstyle (1110)$};
 \draw[fill] (-4, -3.5) circle (.07cm);  \node (-4,5) at (-4.5,-3.5) {$\scriptstyle (0001)$};
 \draw[fill] (4, -3.5) circle (.07cm);  \node (4,-3.5) at (4.5,-3.5) {$\scriptstyle (1000)$};
  \draw[fill] (-5.7, 0) circle (.07cm);  \node (-5.7,0) at (-6.4,0) {$\scriptstyle (0011)$};
 \draw[fill] (5.7, 0) circle (.07cm);  \node (5.7,0) at (6.2,0) {$\scriptstyle (1100)$};
  \draw[fill] (-2, 0) circle (.07cm);  \node (-2,0) at (-2.5,0) {$\scriptstyle (0101)$};
 \draw[fill] (2, 0) circle (.07cm);  \node (1.9,0) at (2.5,0) {$\scriptstyle (1010)$};
  \draw[fill] (-1.5, 1.5) circle (.07cm);  \node (-1.5,1.5) at (-1.99,1.6) {$\scriptstyle (1011)$};
\draw[fill] (1.6, -1.5) circle (.07cm);  \node (1.6,-1.5) at (2,-1.8) {$\scriptstyle (0100)$};
\draw[fill] (-1.6, -1.45) circle (.07cm);  \node (-1.6,-1.345) at (-2.1,-1.6) {$\scriptstyle (0010)$};
\draw[fill] (1.55, 1.35) circle (.07cm);  \node (1.55,1.35) at (2,1.5) {$\scriptstyle (1101)$};

               \draw[black] (-0,5) -- (-4,3.5);
               \draw[black] (-0,5) -- (4,3.5);   
                   \draw[black] (-4,3.5) -- (-0,1.9);
                    \draw[black] (-0,1.9) -- (4,3.5);
                    \draw[black] (-0,-5) -- (-4,-3.5);
               \draw[black] (-0,-5) -- (4,-3.5);    
                   \draw[black] (-4,-3.5) -- (-0,-2);
                    \draw[black] (-0,-2) -- (4,-3.5); 
        \draw[black] (-5.7,0) -- (-4,-3.5);
          \draw[black] (5.7,0) -- (4,3.5);
         \draw[black] (5.7,0) -- (4,-3.5);
         \draw[black] (0,5) -- (-1.5,1.5);
        \draw[black] (2,0) -- (4,3.5);
         \draw[black] (2,0) -- (4,-3.5);
          \draw[black] (-5.7,0) -- (-4,3.5);
         \draw[black] (-2,0) -- (-4,-3.5);
         \draw[black] (-2,0) -- (-4,3.5);
         \draw[black] (-1.5,1.5) -- (-5.7,0);
          \draw[black] (1.6,-1.5) -- (5.7,0);
           \draw[black] (-1.5,1.5) -- (-0,-2);
           \draw[black] (1.6,-1.5) -- (-0,1.9);
                \draw[black] (0,-5) -- (-1.6,-1.45);
               \draw[black] (-5.7,0) -- (-1.6,-1.45);
                \draw[black] (1.9,0) -- (-1.6,-1.45);
                 \draw[black] (0,5) -- (1.55, 1.35);
                 \draw[black] (5.7,0) -- (1.55, 1.35);
                  \draw[black] (0,-5) -- (1.6, -1.45);
                   \draw[black] (0,-2) -- (1.55, 1.35);
                    \draw[black] (-2,0) -- (1.6, -1.5);
                    \draw[black] (-2,0) -- (1.55, 1.35);
                  \draw[black] (0,1.9) -- (-1.6, -1.45);
                  \draw[black] (1.9,0) -- (-1.6, -1.45);
\draw[black] (1.9,0) -- (-1.55, 1.5);

                  \end{tikzpicture}
                  \smallskip 
\end{center}

Thus, we have a bijection between each step of a Segre embedding and an $n-1$ word formed from 0's and 1's. 
\subsubsection*{The Bijection}
This construction defines a bijection between:
\begin{itemize}
\item The set of possible Segre embeddings of a given sequence of $n$ projective spaces.
\item The set of possible binary words of length $n-1$, which describe precisely which Segre operations have been applied.
\end{itemize}

Let us now refine our perspective.

\subsubsection*{A Cubical Structure on the Set of Segre Embeddings}
The set of binary words of length $n-1$, which we have identified with the set of possible Segre embeddings, naturally forms a hypercube of dimension 
$n-1$. That is: \[\mathcal{W}_{n-1}=\{w=(w_1,\cdots,w_{n-1})\, |\, w_i\in \{0,1\}\}.\]

This set of words can be regarded as the vertex set of the 
$(n-1)$-dimensional cube $\mathcal{Q}_{n-1}$, where each coordinate of the word corresponds to a specific Segre embedding decision.

\subsubsection*{Edges and the Hamming Metric}
Two vertices (binary words) in $\mathcal{Q}_{n-1}$ are connected by an edge if and only if their corresponding words differ by exactly one letter. This corresponds precisely to the classical Hamming distance:
  \[d_{H}(w,w')=\sum_{i=1}^{n-1}|w_i-w'_i|.\]
  
Thus, two Segre embedding sequences are adjacent in their corresponding cube if and only if they differ at exactly one step of the embedding process, meaning that precisely only one Segre embedding has been performed on a product of projective spaces. This provides a natural graph-theoretic adjacency structure on the space of all possible Segre embeddings and therefore we have shown that the diagram of embeddings is a hypercube.  
  
  We finish the proof by induction. It is easy to conclude that for a Segre embedding 
  \[\PP^{m_1}\times \cdots \times \underbrace{\PP^{m_i}\times \PP^{m_{i+1}}}_i \times \cdots \times \PP^{m_{n+1}}\]  there exists an $n$-hypercube, which is in bijection and which  describes every set in the Segre embedding.

\end{proof}
\subsection{Pseudo code for generalized Segre embeddings}
\begin{algorithm}
\caption{Generate the Generalized Segre Embedding diagram}
\begin{algorithmic}[1]
\Procedure{Generate\, n\, Segre Embeddings}{n}
    \State \textbf{Input:} Integer \( n \) (dimension of the hypercube)
    \State \textbf{Output:} Set of vertices and edges of the \( n \)-hypercube
    \State

    \Comment{Step 1: Generate all vertices (binary words of length \( n \))}
    \State \( V \gets \{ \text{all binary strings of length } n \} \)

    \Comment{Step 2: Generate edges based on Hamming distance}
    \State \( E \gets \emptyset \)
    \ForAll{pairs \( (v_i, v_j) \) in \( V \)}
        \If{HammingDistance(\( v_i, v_j \)) \( = 1 \)}
            \State \( E \gets E \cup \{ (v_i, v_j) \} \)
        \EndIf
    \EndFor

    \State \textbf{return} \( (V, E) \)
\EndProcedure
\end{algorithmic}
\end{algorithm}

\begin{algorithm}
\caption{Hamming Distance Function}
\begin{algorithmic}[1]
\Function{HammingDistance}{ $s_1$, $s_2$ }
    \State \textbf{Input:} Two binary strings \( s_1, s_2 \) of equal length
    \State \textbf{Output:} Hamming distance \( d \) between \( s_1 \) and \( s_2 \)
    \State \( d \gets 0 \)
    \For{\textbf{each} corresponding pair \( (c_1, c_2) \) in \( s_1, s_2 \)}
        \If{ \( c_1 \neq c_2 \) }
            \State \( d \gets d + 1 \)
        \EndIf
    \EndFor
    \State \textbf{return} \( d \)
\EndFunction
\end{algorithmic}
\end{algorithm}

\vfill\eject
\section{Errors occurring and an attempt to possibly correct them} 

The quantum bits are particular sensitive to external phenomena, this can lead to errors. A difficult problem related to this is to understand how to detect errors and correct them. It can happen for instance that errors occur due to certain external phenomena, such as \emph{cosmic bit flips}. This produces errors also in the classical bit system. Cosmic bit flips refer to errors where individual bits are flipped due to external influences. Such errors form gateways to deeper mathematical considerations. We propose to use the Coxeter chambers and galleries method for error correcting, in the spirit of \cite{C18,C19}. For an introduction to the theory of chambers, refer to Bourbaki  \cite[Chap. IX, Sect. 5.2]{B}.

\, 

\subsection{Coxeter Groups and Reflection Arrangements}

The Coxeter group of type \( A_n \) is given by the symmetric group \( S_{n+1} \), which acts on \( \mathbb{R}^{n+1} \) by permuting coordinates. The corresponding \emph{root system} consists of the hyperplanes:
\[
H_{ij} = \{ x \in \mathbb{R}^{n+1} \mid x_i = x_j \}, \quad 1 \leq i < j \leq n+1.
\]
These hyperplanes decompose space into \emph{Coxeter chambers}, which are simplicial cones.

\subsection{Coxeter Chambers for \( A_n \)}
The chamber where $x_1 < x_2 < \dots < x_{n+1}$ serves as a fundamental chamber. All other chambers are obtained by applying group elements to it.
If we consider the hyperplane \( x_1 + x_2 + \dots + x_{n+1} = 0 \) in \( \mathbb{R}^{n+1} \), on which the Weyl group \( S_{n+1} \) acts naturally by permuting coordinates. This is an \( n \)-dimensional simplex.

\subsection{Transitive Action on Coxeter Chambers}

The Weyl group \( S_{n+1} \) acts simply transitively on the set of Coxeter chambers. That is, for any two chambers, there exists a unique element of \( S_{n+1} \) mapping one chamber to the other.

More precisely, if we consider the hyperplane \( x_1 + x_2 + \dots + x_{n+1} = 0\), for a chamber \( C_\omega \) associated with a permutation \( \omega \in S_{n+1} \), we have:
\[
C_\omega = \omega(C) = \{ x \in \mathbb{R}^{n+1} \mid x_{\omega(1)} < x_{\omega(2)} < \dots < x_{\omega(n+1)}, \quad x_1 + \dots + x_{n+1} = 0 \}.
\]
This establishes a bijective correspondence between \( S_{n+1} \) and the set of chambers.

\subsubsection{Pure separable product states}
We will consider the case of pure separable product states. 
\begin{thm}
Assume $|\psi\rangle$ is a pure separable product state. Suppose that the coefficients of $|\psi\rangle$ are perturbed by an error occurring, defining thus a new vector $|\psi'\rangle$. Then, if $|\psi'\rangle$ is a pure separable product state then either it belongs to the same Coxeter chamber $C$ as $|\psi\rangle$ on the Segre variety or it belongs to a different Coxeter chamber $C\neq C'$ such that $C'=gC$, where $g$ is an element of the group $S_{n+1}$. 
\end{thm}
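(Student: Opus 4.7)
The plan is to extract the statement directly from the structural facts already catalogued in subsections 4.1--4.3, since once the setup is in place the theorem reduces to the simple-transitivity property of the Weyl group action on chambers. First I would observe that both hypotheses on $|\psi\rangle$ and $|\psi'\rangle$ place the two state vectors on the Segre variety: by the general $n$-qubit separability theorem proved above, a pure separable product state corresponds to a point of the image of $\mathbb{P}^1 \times \cdots \times \mathbb{P}^1 \hookrightarrow \mathbb{P}^{2^n-1}$. Hence both $[\psi]$ and $[\psi']$ lie on the common locus $\mathcal{S}$ cut out by the $2 \times 2$ minors.

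Next I would set up the Coxeter action. The symmetric group permuting the tensor factors (equivalently, permuting the associated homogeneous coordinates of $\mathbb{P}^{2^n-1}$) preserves $\mathcal{S}$, because permuting rows and columns of the coordinate matrix permutes its $2 \times 2$ minors among themselves. This action coincides with the type-$A$ Coxeter action used in subsection 4.1 to define the reflection hyperplanes $H_{ij} = \{x_i = x_j\}$, so the chamber decomposition of the ambient space restricts to a well-defined chamber decomposition of $\mathcal{S}$. Let $C$ be the chamber containing $[\psi]$ and $C'$ the chamber containing $[\psi']$.

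The heart of the argument is then a direct appeal to the simply transitive action of $S_{n+1}$ on Coxeter chambers recalled in subsection 4.3. Either $C = C'$, in which case the error keeps the state within the same chamber, or $C \neq C'$, in which case there exists a unique $g \in S_{n+1}$ such that $C' = g \cdot C$. Taking this $g$ is exactly the dichotomy the theorem asserts.

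The main obstacle, more conceptual than technical, lies in ensuring that every pure separable product state lies in the \emph{interior} of some chamber rather than on a wall $H_{ij}$. A state whose coordinates happen to satisfy $Z^{i} = Z^{j}$ for some $i \neq j$ sits on the boundary between chambers, and then the assignment of a single chamber is ambiguous. To handle this cleanly one can either impose a genericity condition on $|\psi\rangle$ and $|\psi'\rangle$ (so that the perturbation stays off every wall), or replace open chambers by their closures and read the conclusion as: there exists $g \in S_{n+1}$ with $[\psi'] \in g \cdot \overline{C}$. In both formulations the conclusion follows from simple transitivity, and the transition element $g$ is the natural candidate for tracking — and potentially undoing — the error.
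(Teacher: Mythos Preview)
Your proposal is correct and follows essentially the same approach as the paper: both arguments establish that the Segre variety is invariant under the type-$A$ Coxeter action (you via the observation that permuting rows and columns permutes the $2\times 2$ minors, the paper via explicit low-dimensional examples followed by the same index-permutation argument), and both then reduce the dichotomy to the transitive action of $S_{n+1}$ on chambers. Your treatment is in fact slightly more careful than the paper's, since you flag the genericity issue of states lying on a wall $H_{ij}$, which the paper does not address.
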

\begin{proof}
\subsubsection*{Coxeter Group of Type $A$ and Its Chambers}
The Coxeter group of type 
$A_n$ corresponds to the symmetric group \(S_{n+1}\), which acts by permuting coordinates.
The chambers of this Coxeter group are formed from simplices, which are regions of space delimited by hyperplanes (also called walls) $H_{ij}$.
	These walls act as mirrors in the sense that reflections across them swap adjacent elements in a sequence. 
\subsubsection*{Step 1: Interpretation for the Segre Variety:}
Since the Segre variety is invariant under the action of this Coxeter group, it can be partitioned into chambers. The closure of a chamber is a fundamental domain, meaning that the entire space can be generated by acting on one chamber with the Coxeter group.
This claim can be already checked on the simplest examples:

Let us take the simplest Segre variety: $S=\{z_{00}z_{11}-z_{01}z_{10}\}$. By performing a permutation of the set of indices \{0,1\}, we have: 

\[\begin{aligned}
 &\omega\in S_2,&   0 \mapsto 1 \\
       &&1 \mapsto 0 \\
\end{aligned}\]
Therefore, the Segre variety $S_{\omega}$ becomes $z_{11}z_{00}-z_{10}z_{01}$. As we can see, nothing has changed. Therefore the Segre surface is invariant under the Coxeter group of type $A$.

We can check for this for the case discussed earlier arising from the Segre embedding $\bP^2\times \bP^2\hookrightarrow \bP^8$.
This is given by \[([x_0:x_1:x_2],[y_0:y_1:y_2])\mapsto [x_0y_0:x_0y_1:x_0y_2:x_1y_0:x_1y_1:x_1y_2:x_2y_0:x_2y_1:x_2y_2.]\]
The Segre variety is given by \[V(\{z_{ij}z_{kl}-z_{il}z_{kj}\, |\, 0\leq i,k\leq 2, \quad 0\leq k,l\leq 2 \}) \quad\text{in}\quad \bP^8\, \text{i.e.}\]
\[
\begin{aligned}
&a.& z_{00}z_{11} - z_{01}z_{10}=0, \\
&b.& z_{00}z_{12} - z_{02}z_{10}=0, \\
&c.&  z_{01}z_{12} - z_{02}z_{11}=0, \\
&d. & z_{00}z_{21} - z_{01}z_{20}=0, \\
&e.& z_{00}z_{22} - z_{02}z_{20}=0, \\
&f.& z_{01}z_{22} - z_{02}z_{21}=0, \\
&g.&  z_{10}z_{21} - z_{11}z_{20}=0, \\
&h.&  z_{10}z_{22} - z_{12}z_{20}=0, \\
&i.&  z_{11}z_{22} - z_{12}z_{21}=0.
\end{aligned}
\]
Let us chose a specific permutation \(\omega\in S_3\):
\[\begin{aligned}
 &\omega\in S_3,&   0 \mapsto 2 \\
       &&1 \mapsto 0 \\
         &&2 \mapsto 1 \\
\end{aligned}\]
This implies that the equation \((a)\) which is $z_{00}z_{11} - z_{01}z_{10}=0$ is mapped to $z_{22}z_{00} - z_{20}z_{02}=0$ which is equation \((e)\). More generally any permutation of  \(S_3\) will map a given equation from the set \(\{a,\cdots, i\}\) to another equation from the same set.  A proof of the invariance of the Segre variety under the Coxeter group of type $A$ is easy to outline. We sketch the idea below. 
Given the Segre embedding $\bP^m\times \bP^n\hookrightarrow \bP^{(m+1)(n+1)-1}$, it is given in coordinates by

\[([x_0:\cdots :x_m],[y_0:\cdots:y_n])\mapsto [x_0y_0:x_0y_1:x_0y_2:x_1y_0:\cdots :x_my_n.]\]
The Segre variety is given by \[V(\{z_{ij}z_{kl}-z_{il}z_{kj}\, |\, 0\leq i,k\leq m, \quad 0\leq l,j\leq n\}) \quad\text{in}\quad \bP^{(m+1)(n+1)-1}.\]

Consider, the pair $\omega=(p,q)$ residing within the Cartesian product $S_{m+1} \times S_{n+1}$, where \(S_{m+1}\) and \(S_{n+1}\) stand for the symmetric groups defined respectively for $m+1$ elements of the finite set \(\{0, \dots, m\}\) and $n+1$ elements of the set \(\{0, \dots, n\}\). The elements $p$ and $q$ are nothing but elements of the automorphism group of those finite sets i.e $p\in {\rm Aut}(\{0,\cdots,m\})$ and $q\in {\rm Aut}(\{0,\cdots,n\})$.

Now, let us consider the defining equations of the Segre variety: quadratic relations \(\{z_{ij}z_{kl} - z_{il}z_{kj} = 0\}\) defined for all $0\leq i,k\leq m$ and $0\leq l,j\leq n$. 
For any such equation $\{z_{ij}z_{kl}-z_{il}z_{kj}=0\}$,  the action of $\omega$ maps it to  the new equation \[\{z_{p(i)q(j)}z_{p(k)j(l)}-z_{p(i)q(l)}z_{p(k)q(j)}=0\},\]
where $0\leq p(i),p(k)\leq m, \quad 0\leq q(l),q(j)\leq n$.

Yet, by the very nature of \(p\) and \(q\) as elements of the automorphism group $p\in {\rm Aut}(\{0,\cdots,m\})$ and $q\in {\rm Aut}(\{0,\cdots,n\})$, the equation \[\{z_{p(i)q(j)}z_{p(k)j(l)}-z_{p(i)q(l)}z_{p(k)q(j)}=0\}\] can only exist within the generators of the ideal, given by the  relations 

$V(\{z_{ij}z_{kl}-z_{il}z_{kj}\, |\, 0\leq i,k\leq m, \, 0\leq l,j\leq n\})$. Therefore, the Segre variety is invariant under a Coxeter group of type $A$.

\subsubsection*{Step 2: Representation of the coordinates of $|\psi\rangle$ as a Point in a Coxeter Chamber}
Assume $|\psi\rangle$ correspond to a point on the Segre variety, lying within a given Coxeter chamber, denoted by $C$.
 The closure of such a Coxeter chamber can be considered as a  fundamental domain. 
 
 \subsubsection*{Step 3: Transitive Action on the Coxeter Chambers}
 The Coxeter group acts transitively on the chambers. That is, for any two chambers $C$ and $C'$ there exists an element $g$ of the Coxeter group such that $C=gC'$. This means that one chamber can be mapped to another by a suitable permutation.
 If the coefficients of $|\psi\rangle$  are modified, the new vector remains either in the same chamber $C$ or in another chamber $C'\neq C$. If $|\psi'\rangle$
 belongs to a different chamber, there must exist a group element $g$ that maps $C$ to $C'$. 

 \subsubsection*{Step 4: Conclusion}
 Therefore, this method allows us to keep track of possible occurring errors. Due to the simplicity of this method we can easily find an element which is the inverse of $ g$ within the Coxeter group. This ensures that we can recover the original chamber $C$.   

\end{proof}

\section{Maximally Entangled States and Their Geometry}
We end this paper by mentioning some important information concerning entanglement. A state is maximally entangled if and only if the 
matrix $c$ is unitary. Since an overall factor of this matrix is
irrelevant for the state one can conclude that the space of maximally entangled 
states is isomorphic to $SU(N)/\mathbb{Z}_N$. This happens to be an interesting submanifold 
of $ \mathbb{P}^{N^2-1}$: 
\begin{itemize}
\item it is \textit{Lagrangian} (a submanifold with vanishing symplectic form and half the dimension of the symplectic embedding space) and
\item \textit{minimal} (any attempt to move it will increase its volume).
\end{itemize}

When $N = 2$ we are looking at $SU(2)/{\mathbb Z}_2 = 
{\mathbb R}{\bf P}^3$. To see what this space looks like in the octant 
picture we observe that 
\begin{equation}
 \Gamma_{ij} = \left[ \begin{array}{cc} {\alpha} & {\beta} \\ 
- {\beta}^* & {\alpha}^* \end{array} \right] \hspace{5mm} \Rightarrow 
\hspace{5mm} Z^{\alpha} = ({\alpha}, {\beta}, -{\beta}^*, {\alpha}^*) \ .
\label{CijZ}
 \end{equation}

This space can be visualized in the octant representation as a geodesic passing through the center of the tetrahedral configuration representing entanglement classes.

\medskip

\end{document}